\def\BState{\State\hskip-\ALG@thistlm}
\DeclareMathAlphabet{\mathpzc}{OT1}{pzc}{m}{it}
\theoremstyle{plain}
\newtheorem{Def}{Definition}%[section]
\newtheorem{exa}{Example}%[section]
\newtheorem{prop}{Proposition}%[section]
\newtheorem{lemma}{Lemma}%[section]
\newtheorem{corol}{Corollary}%[thm]
\newtheorem{claim}{Claim}%[section]
\theoremstyle{remark}
\newtheorem{remark}{Remark}%[section]
\theoremstyle{discussion}
\theoremstyle{plain}
\newtheorem{conj}{Conjecture}%[section]
\def\F{{\mathbb F}}
\def\1{{\mathds 1}}
\newcommand{\mZ}{\mathbb{Z}}
\newcommand{\e}{\varepsilon}
\newcommand{\rank}{\mathrm{rank}}
\begin{document}

%
%***********************************************************************
%

\title{Linear Boolean classification, coding and \\``the critical problem''\footnote{Part of this work was presented at ISIT 2014 \cite{isit_annulus}.}
}

\author{Emmanuel Abbe\thanks{Program in Applied and Computational Mathematics, and EE department, Princeton University, Princeton, USA, \texttt{eabbe@princeton.edu}. Research supported in part by NSF grant  CFF-1319299.} \and
Noga Alon\thanks{Sackler School of Mathematics and Blavatnik School of Computer Science, Tel Aviv University,
Tel Aviv, Israel and School of Mathematics, Institute for Advanced Study, Princeton, USA, 
\texttt{nogaa@tau.ac.il}. Research supported in part by a USA-Israeli BSF grant,
by an ISF grant, by the Israeli I-Core program and by the Oswald
Veblen Fund.} \and 
Afonso S.\ Bandeira\thanks{Program in Applied and Computational Mathematics, Princeton University, Princeton, USA,
\texttt{ajsb@math.princeton.edu}. Research supported by AFOSR Grant No. FA9550-12-1-0317.}
\and 
Colin Sandon\thanks{Department of Mathematics, Princeton University, Princeton, USA,
\texttt{sandon@princeton.edu}.
}}
%\date{April 16, 2013}
\date{}

\maketitle

\begin{abstract}
This paper considers the problem of linear Boolean classification, where the goal is to determine in which set, among two given sets of Boolean vectors, an unknown vector belongs to by making linear queries. Finding the least number of queries is equivalent to determining the minimal rank of a matrix over $GF(2)$ whose kernel does not intersect a given set $S$. In the case where $S$ is a Hamming ball, this reduces to finding linear codes of largest dimension. For a general set $S$, this is an instance of ``the critical problem'' posed by Crapo and Rota in 1970, open in general.
This work focuses on the case where $S$ is an annulus. As opposed to balls, it is shown that an optimal kernel is 
composed not only of dense but also of sparse vectors, and the optimal mixture is identified in various cases. 
These findings corroborate a proposed conjecture that for an annulus of inner and outer radius $nq$ and $np$ respectively, the optimal relative rank is given by the normalized entropy $(1-q)H(p/(1-q))$, an extension of the Gilbert-Varshamov bound. 
%$H(p)$ conjectured for Hamming balls of radius $np$. 
%Motivations for studying annuli are given with the problem of linear Boolean classification.  
\end{abstract}

%
%***********************************************************************
%

\section{Introduction}
We introduce the problem {\it linear Boolean classification} with the following example. Let $S$ be a data set of images partitioned into $k$ categories, i.e., $S = S_1 \sqcup S_2 \sqcup \ldots \sqcup S_k$, where each $S_i$ is a class of images (e.g., cats, apples, etc.). Think of each image $x$ as vector of pixels, say $x \in \F_q^n$, for some fixed $q$. We consider $q=2$ in what follows for simplicity (e.g., black and white images). Note that the use of a field $\F_q$ is irrelevant for now, any finite set would do. 

Let $x$ be an unknown image that belongs to $S$. One would like to determine in which class $x$ belongs to. One possibility is to query each entry of $x$ and then look for which class it belongs to. This would take $n$ queries in the worst-case, assuming that queries are not made adaptively. Is it possible to make fewer queries? Let us consider some examples. 

\begin{exa}
If $S$ contains two classes (i.e., $k=2$), the Boolean vectors having at most $s$ ones, and the Boolean vectors having at most $s$ zeros. \end{exa}
Then it is easy to check that $2s+1$ queries are enough to determine in which set an unknown vector belongs to: $2s$ queries may reveal $s$ ones and $s$ zeros, but an additional query would break ties. 
\begin{exa}
If $S$ contains two classes and the first class contains all Boolean vectors with exactly $s$ ones and the second class contains all Boolean vectors with exactly $s+2$ ones, with, say, $s =n/8$. 
\end{exa}
How many bits should one query to classify an unknown $x$? 
At this point, one should specify what is allowed for a ``query.'' If querying means simply to pole a subset of coordinates, then the problem (for any classes) is simply to look for the smallest subset of coordinates $S \subseteq [n]$ for which no two vectors in different classes can be confused. If the sets are permutation invariant like in the last two examples, this  amounts to look for the minimal number of components for which no two vectors in different classes can have equal weight. In the latter example, this requires poling $n-s +1 \sim \frac{7}{8}n$ coordinates. This is not a particularly interesting problem. 

Assume now that one is allowed to make ``linear queries'' and not just poling, that is, a query can ask for the parity of a subset of components. We refer to this as the {\it linear Boolean classification} problem. From an application point of view, this is motivated by the fact that linear measurements may be accessible without acquiring the full vector, like in compressed sensing. From a theory of computation point of view, this is assuming a model where XOR gates are free. Can one further improve on poling with linear queries? 

In the first example above, linear queries do not help, however they do in the second example. A set of linear queries allows to discriminate the two classes if there is no pair of elements $x,x'$ in the different classes that have the same output with the queries. If $A$ is an $m \times n$ matrix where each row represents a linear query, this means that
\begin{align}
&Ax \neq Ax', \quad \forall x \in S_1, x' \in S_2 \\
\Leftrightarrow \quad & A v \neq 0, \quad \forall v \in S_1 + S_2,
\end{align}
where $S_1 + S_2$ is the set of vectors obtained by adding vectors from each set. In the second example above, $S_1+S_2$ is the set of vectors having even weight at most $2s+2$, without the all-zero vector.  Hence $Av \neq 0$ for all $v$ in the sumset means that the linear code whose parity-check matrix is $A$ has no codeword of even weight at most $2s+2$. We know from coding theory that there exists codes of distance $d=2s+2$ with dimension \begin{align}
k=n(1-H(d/n)) + o(n),
\end{align} 
where $H$ is the entropy function (flattened at $1$ when the argument is more than $1/2$), and where the above can be achieved with a random code (the Gilbert-Varshamov bound). Hence the normalized number of queries $m/n$ can be close to $H(d/n) \sim H(1/4)$. Since $H(1/4) < \frac{7}{8}$, linear queries improve on poling. 

This gap can be made arbitrarily large by replacing $s=n/8$ with $s=\e n$, $\e \to 0$. The idea is that as the vectors in the classes get sparser, dense linear queries will be much more efficient in discriminating the two classes than poling, which is equivalent to projecting on unit vectors. Note also that the previous example is not a special case. For arbitrary classes, one expects that relying also on dense projections should outperform poling. Poling can be optimal, like in the first example above, when the subset $S_1+S_2$ is itself dense, in which case it may be preferable to project on sparse queries. A natural problem is hence to study these tradeoffs depending on the structure of $S_1$ and $S_2$, or more specifically, on $S_1+S_2$. 

\subsection{Results}
In this paper, we focus on the case of two symmetric classes. By symmetric, we mean that for any vector in the class, all its permutations are also in the class. Any such class can be described by providing a list of ``permitted'' Hamming weights. 
This paper further focuses on the case where the classes are annuli, i.e., where the permitted weights are in intervals $[u,v]$ with $0 \leq  u \leq v \leq n$. Besides for special cases, the sumsets will then be annuli of even weights, say in $[a,b]$. Hence, since $a$ may be positive, this differs from coding theory in the sense that we are allowed to have codewords of low enough weight, rather than allowing only for codewords of high enough weight (i.e., codes with specific weights).

In a first version of this paper \cite{isit_annulus}, the first three authors conjectured that the least number of linear queries needed to classify two sets whose sumset is the annulus $[a,b]$, $a<b$, satisfies the property 
\begin{align}
m^*(a,b,n)\stackrel{?}{=}m^*(1,b,n-a+1). \label{trans}
\end{align}
This would mean that the least number of linear queries for classifying sets whose sumset is the annulus $[a,b]$ in $\F_2^n$ is obtained from the annulus $[1,b]$ (which is a pinged ball centered at $0$) in $\F_2^{n-a+1}$.
In other words, the optimal tradeoff for classification would be reduced to tradeoffs of linear coding in smaller dimension. 
In terms of choosing the linear queries, equality \eqref{trans} reads as follows:
the optimal queries can taken as the vectors having zero's in the first $a-1$ components, and for the remaining $n-a+1$ components, the parity-check matrix of an optimal code (e.g., drawn uniformly at random) of length $n-a+1$ and distance $b$. 
Note the following two case:
\begin{itemize}
\item if $m^*(1,b,n-a+1)=n-a+1$, i.e., if the parity-check matrix of the code spans the entire space $\F_2^{n-a+1}$, then any set of $n-a+1$ unit vectors are optimal queries. This is for example the case in the first example of the introduction, where $S_1$ and $S_2$ are balls at the $0^n$ and $1^n$ respectively, and where poling is optimal. 
\item if $m^*(1,b,n-a+1)<n-a+1$, the parity-check matrix of the code can be chosen in a standard form with the identity matrix in the first $m^*(1,b,n-a+1)$ components and a `dense' matrix in the last $n-a+1-m^*(1,b,n-a+1)$ components. In the second example of the introduction,\footnote{Note that in this example, we ignored the fact that the vectors are allowed to have odd weight.} $a=1$ so that there is no zero-padding and the queries are obtained from the code. 
\end{itemize}
One can also interpret the optimal queries in terms of their orthogonal space: it consists of $a-1$ unit vectors and on the remaining components, a code of length $n-a+1$ and distance $b$.

This paper shows that \eqref{trans} holds for all $1 \leq a < b  \leq n$ with $b \geq 2a-2$ (Proposition \ref{mainprop}), but gives a counter-example to the general statement (Proposition \ref{counterex}). However, it remains a plausible conjecture that \eqref{trans} ``holds asymptotically,'' i.e., for any $0 \leq \alpha < \beta \leq 1$, 
\begin{align}
m^*(\alpha n,\beta n, n)=n(1-\alpha) \tilde{H}\left(\frac{\beta}{1-\alpha}\right) +o(n),
\end{align}
where $\tilde{H}(c)$ is the optimal compression rate of a code of relative distance $c$ (which matches the entropy function $H(c)$ if the GV bound is tight).  Along the way, we discuss connections to additive combinatorics (the Freiman-Ruzsa conjecture) for non-symmetric sets.

\subsection{Background on linear coding}\label{introa}
One of the fundamental problems of coding theory is to identify
the largest dimension of a binary code with a given 
% dimension  Noga
length and
distance. This means to identify the largest cardinality of a subset
of $\F_2^n$ whose elements are at distance at least $d$ from each
other. This is a well-known open problem in general. Even for the case
of a {\it linear code}, i.e., a subspace of $\F_2^n$, the problem is
open. From the parity-check matrix viewpoint, constructing a linear code of
distance $d$ is equivalent to constructing a matrix $M$ such that $Mx$
allows to recover $x$ for all $x$ having weight at most $\left\lfloor (d-1)/2
\right\rfloor$, equivalently, to construct a matrix $M$ of least rank such that
\begin{align}
M x \neq Mx', \quad \forall x,x' \in B(0^n,s), x \neq x', \label{prop1}
\end{align}   
where 
\begin{align}
B(0^n,s)&=\{ x \in \F_2^n : w(x) \leq s\},\\
s&=\left\lfloor (d-1)/2 \right\rfloor,
\end{align}   
is the Hamming ball of centre $0^n$ and radius $s$, 
and where $w(x)$ denotes the Hamming weight of $x$. 
Note that for a fixed $s$ and $n$, the least rank of matrices satisfying property \eqref{prop1} is a finite integer, we denote it by $m^*(2s,n)$. The factor 2 will be justified later.  
Finding $m^*$ for general values is a difficult problem as mentioned previously, and even in the asymptotic regime of $s,n$ diverging with a fixed ratio $s/n=\delta/2$, the problem is still open. 
In fact, it is believed by some (e.g., Goppa's conjecture) that the answer is given by 
\begin{align}
m^*(\delta n,n)=nH(\delta)+o(n),
\end{align}   
the Gilbert-Varshamov (GV) bound \cite{gilbert-bound,varshamov-bound}, where 
\begin{align}
H(\delta)=  -\delta \log_2 \delta - (1-\delta) \log_2 (1-\delta),  
\end{align}
if $\delta \in [0,1/2]$ and $H(\delta)=1$ if $\delta \in (1/2,1]$. 
%\begin{align}
%H(p)= \begin{cases} -p \log_2 p - (1-p) \log_2 (1-p), \quad \text{if } p \in [0,1/2]\\
%1, \quad \text{if } p \in (1/2,1]. 
%\end{cases}

%The same bound holds for non-linear binary codes. 
It is not difficult to establish this bound, with a greedy algorithm or with a
probabilistic argument, but it has not been improved since the 50's in
the asymptotic regime, nor has it been proved to be tight.

Linear codes are particularly interesting for several reasons. Their
encoding complexity is reduced from exponential (in the worst case) to
quadratic in the blocklength (specifying a basis). Moreover, most of the
codes studied in the literature and used in applications with efficient
decoding algorithms are linear. There are also other interesting features
of linear codes, such as their duality with linear source codes. The
parity-check matrix of a linear code can be viewed as a linear source
compressor, for a source distribution (or a source model in the worst-case
setting) equivalent to the error distribution (or model). In particular,
if the source model is given by the $k$-sparse sequences, i.e., binary
sequences with at most $k$ ones, then the optimal compression dimension,
assuming the Gilbert-Varshamov bound to be tight, is given by $nH(2k/n)
+ o(n)$, where the first term is approximately 
%\begin{align}
$2k\log n/k$,
%\end{align} 
when $k/n$ is small. 

\section{Preliminaries and ``the critical problem''}
\subsection{Linear coding for general models}
As expressed in \eqref{prop1}, linear coding can be viewed as constructing
flat matrices which are injective for sequences constrained to have a
bounded number of ones. This concerns linear coding for the traditional
Hamming ball model. One can consider more general models, in which case
the injectivity property \eqref{prop1} needs to be guaranteed for vectors
$x$ belonging to a specified set $S \subseteq \F_2^n$.  From now on,
we define linear codes by means of parity-check matrices.

\begin{Def}
A linear code $M: \F_2^n \to \F_2^m$ can compress losslessly a source model $S \subseteq \F_2^n$ (or can correct the error patterns in $S \subseteq \F_2^n$), if 
\begin{align}
M x \neq M x', \quad \forall x,x' \in S, x \neq x',
\end{align}
i.e., if $M$ is $S$-injective. 
%The rank of $M$ is $m$, and the dimension of the code is given by $n-m$. %the nullity of $M$.    
\end{Def}

\begin{Def}
For a given set $S \subseteq \F_2^n$, we define the linear compression dimension of $S$ by 
\begin{align}
m^*(S)= \min_{M \in \text{$S$-injective}} \rank(M), 
\end{align}
and the linear compression rate of $S$ by $m^*(S)/n$.
\end{Def}
Note that $m^*(S)$ can be equivalently defined by 
\begin{align}
m^*(S)= n-\max_{V: (V \setminus \{0\}) \cap (S+S) = \emptyset} \dim(V), 
\end{align}
where $V$ denotes a subspace of $\F_2^n$. In other words, we need to find the largest dimension of a subspace avoiding $(S+S) \setminus \{0\}$. Throughout the paper, the sum of two sets is defined by $S_1+S_2=\{s_1+s_2: s_1 \in S_1, s_2 \in S_2\}$.  
The problem of finding the largest dimension of a vector space which does not intersect a given subset of $\F_q^n$ (where $q$ is a power of a prime) is known as ``the critical problem'' and was posed\footnote{In fact, an even more general formulation is proposed in \cite{crapo-rota}} by Crapo and Rota in \cite{crapo-rota}. 
Even for $q=2$, it is an open problem for arbitrary sets.  

Note also that if one is allowed to use a non-linear map $M$, required to be $S$-injective, the ``dimension'' of $M$ can be as low as $\log_2(|S|)$, but not lower. Quotes are used on ``dimension'' since the map is non linear and since a priori $\log_2(|S|)$ may not be an integer. For linear maps, we then have 
\begin{align}
\log_2(|S|) \leq m^*(S). \label{lower}
\end{align}
One can also obtain the following upper-bound with a probabilistic argument. 
\begin{lemma}\label{sum-bounds}
For any $S \subseteq \F_2^n$,
\begin{align}
\log_2(|S|) \leq  m^*(S) \leq \left\lfloor \log_2(|S+S|-1) \right\rfloor +1. \label{GV2}
\end{align}
%where the second inequality requires $n$ to be large enough.
\end{lemma}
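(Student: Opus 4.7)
The lemma contains a lower bound and an upper bound, and both are standard-style arguments.

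For the lower bound $\log_2 |S| \le m^*(S)$: take any $S$-injective linear map $M$ of rank $m^*(S)$. Since $M$ restricted to $S$ is injective into the image of $M$, which has cardinality $2^{\mathrm{rank}(M)} = 2^{m^*(S)}$, we get $|S| \le 2^{m^*(S)}$, and taking $\log_2$ gives the claim. This really is one line.

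For the upper bound, I would use a probabilistic (Gilbert--Varshamov style) argument on a random $m \times n$ matrix over $\F_2$, where $m = \lfloor \log_2(|S+S|-1)\rfloor + 1$. Using the equivalent formulation $Mx \neq Mx'$ for all distinct $x,x' \in S$ iff $Mv \neq 0$ for all $v \in (S+S)\setminus\{0\}$, it suffices to find a linear map $M: \F_2^n \to \F_2^m$ whose kernel avoids the nonzero elements of $S+S$. Draw $M$ uniformly at random among $m\times n$ matrices over $\F_2$. For any fixed nonzero $v \in \F_2^n$, the entries of $Mv$ are i.i.d.\ Bernoulli$(1/2)$, so $\prob(Mv = 0) = 2^{-m}$. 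A union bound over the (at most $|S+S|-1$) nonzero elements of $S+S$ gives
\begin{align}
\prob\bigl(\exists v \in (S+S)\setminus\{0\}: Mv = 0\bigr) \le (|S+S|-1)\,2^{-m}.
\end{align}
Our choice $m = \lfloor \log_2(|S+S|-1)\rfloor + 1$ ensures $2^m > |S+S|-1$, so this probability is strictly less than $1$. Hence some $S$-injective $M$ exists with $\mathrm{rank}(M) \le m$, which yields $m^*(S) \le m$.

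There is no real obstacle here; the only thing to be careful about is the off-by-one in the floor, i.e.\ verifying that the strict inequality $\lfloor \log_2 N \rfloor + 1 > \log_2 N$ (for $N = |S+S|-1 \ge 1$) gives $2^m > N$ so that the union bound is strictly below $1$. Degenerate cases ($|S|\le 1$, or $S+S = \{0\}$) should be handled separately or noted as trivial, since then $m^*(S)=0$ and both bounds hold vacuously.
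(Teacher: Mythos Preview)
Your proof is correct and matches the paper's approach: the paper does not spell out a proof in the text but states that the upper bound follows from ``a probabilistic argument'' and refers to \cite{course1}, which is exactly the random-matrix union-bound argument you give; the lower bound is the trivial injectivity count. Your handling of the off-by-one and the degenerate cases is fine.
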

A proof of this 
% Noga
simple 
lemma is available in \cite{course1}. 
For Hamming balls, the above bounds are equivalent to the Hamming and Gilbert-Varshamov bounds. 
Note that if $S$ is a subspace, then the bounds match and are equal to $\log_2(|S|)$. 

\subsection{Linear coding for subspaces}
The lower bound \eqref{lower} is clearly achieved if $S$ is a subspace of $\F_2^n$, i.e., if $S+S=S$, using for $M$ the projection on $S$. 
One may ask for what kind of set $S$ are the two bounds in the lemma matching in the asymptotic regime, i.e., up to $o(n)$. This is  equivalent 
to asking for the doubling constant of $S$ to be sub-exponential, i.e., 
\begin{align}
 \frac{|S+S|}{|S|} = 2^{o(n)}. 
\end{align}
One may expect that this holds only if the set $S$ is closed to a subspace in some sense. In fact, this is related to the 
Polynomial Freiman-Ruzsa conjecture (see~\cite{Green_conjecture}):
\begin{conj}\label{ruzsa}[Polynomial Freiman-Ruzsa conjecture]
If $S$ has a doubling constant at most $K$, then $S$ is contained in the union of $K^{O(1)}$ translates of some subspaces of size at most $|S|$. 
\end{conj}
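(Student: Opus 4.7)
The plan is to combine Fourier-analytic structure theorems with combinatorial covering arguments, roughly following the Freiman–Ruzsa–Bogolyubov–Sanders line of attack. First, from $|S+S| \le K|S|$ the Plünnecke–Ruzsa inequality gives $|kS - \ell S| \le K^{k+\ell} |S|$ for all $k,\ell \ge 0$. Specializing to $2S-2S$, Bogolyubov's lemma (in its $\F_2^n$ form, where Bohr sets are genuine subspaces) shows that $2S-2S$ contains a subspace $V \subseteq \F_2^n$ of codimension $O(\log K)$, hence of size at least $|V| \ge 2^{-O(\log K)} \cdot 2^n$, which I would then compare against $|S|$ via the doubling hypothesis.

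The second step is to cover $S$ by a small number of translates of a subspace of size at most $|S|$. Ruzsa's covering lemma applied to $S$ and $V$ yields $S \subseteq T + (V-V) = T + V$ for some $T \subseteq \F_2^n$ with $|T| \le |S+V|/|V|$; pushing this through the Plünnecke bounds one gets $|T| \le K^{O(1)}$, but $V$ may be far larger than $|S|$. Sanders' quasi-polynomial Bogolyubov theorem refines this, producing a subspace inside $2S-2S$ of codimension $(\log K)^{O(1)}$; intersecting it with a generic coset and repeatedly trimming using Ruzsa's triangle inequality one arrives at a covering of $S$ by $\exp((\log K)^{O(1)})$ translates of a subspace $V'$ with $|V'| \le |S|$. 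This is the current state of the art and is shy of the polynomial bound claimed by Conjecture \ref{ruzsa} by a quasi-polynomial factor in $\log K$.

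To close the gap, my plan would be to recast the iteration in entropic language: replace the uniform measure on $S$ by its Shannon entropy $H(X)$ with $X$ uniform on $S$, and use Tao's entropic Plünnecke–Ruzsa inequalities along with convolution monotonicity to track the ``additive energy'' through the refinement. The key technical ingredient would be a sharpened entropy version of Bogolyubov's argument: given $X$ with $H(X_1+X_2) \le H(X) + \log K$, exhibit a subspace $V$ such that the conditional entropy $H(X \mid X \bmod V)$ is large and the number of cosets of $V$ meeting $S$ is only $K^{O(1)}$. Iterating such a statement while keeping the entropy loss linear (rather than super-linear) in $\log K$ at each stage would yield the polynomial bound.

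The main obstacle is exactly this quantitative gain: every known proof of a Bogolyubov-type lemma loses a factor that is polynomial in $\log K$ but super-constant, and these losses compound over the iteration. Overcoming the $(\log K)^{O(1)}$ barrier requires either a non-iterative structural theorem (producing the subspace in one shot) or a monotonicity formula along the iteration that is strong enough to absorb the losses; the absence of such a tool is why \ref{ruzsa} is stated here as a conjecture rather than proved, and I would not expect my proposal to overcome this barrier without a substantial new idea on the entropy side.
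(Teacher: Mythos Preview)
The statement is labeled a \emph{conjecture} in the paper, and the paper offers no proof of it whatsoever; it is quoted as background (with a reference to Green's notes) solely to motivate the authors' own Conjecture~2 on linear compression dimension. There is therefore no ``paper's own proof'' against which to compare your proposal.

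You recognize this yourself in your closing paragraph: what you have written is not a proof but a competent survey of the Bogolyubov--Ruzsa--Sanders line, together with an honest acknowledgment that Sanders' quasi-polynomial bound is the current ceiling and that your entropic reformulation, while suggestive, does not by itself breach the $(\log K)^{O(1)}$ barrier. That assessment was accurate at the time the paper was written. As a side remark outside the scope of the paper: the conjecture over $\mathbb{F}_2^n$ was subsequently resolved by Gowers, Green, Manners, and Tao (2023), and their argument is indeed entropic, so your instinct about where progress might come from was well-founded---but that work postdates the paper and is certainly not what the paper does.
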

%Trusting  Noga
By
the probabilistic bound in \eqref{GV2}, we are therefore motivated to state the following conjecture.  
\begin{conj}\label{ruzsa-coding}
If the linear compression dimension of $S$ is given 
by $\log_2 (|S|) + o(n)$ (hence matches the non-linear compression dimension) then $S$ is contained in 
the union of $2^{o(n)}$  
%$o(n)^{O(1)}$ 
translates of some subspaces of size at most $|S|$. 
\end{conj}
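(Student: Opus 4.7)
The natural plan is to derive the conjecture from the Polynomial Freiman-Ruzsa conjecture (Conjecture \ref{ruzsa}) by first establishing small doubling for $S$. Concretely, I would prove the implication $m^*(S) = \log_2|S| + o(n) \Rightarrow |S+S|/|S| \leq 2^{o(n)}$, and then invoke Conjecture \ref{ruzsa} with $K = 2^{o(n)}$ to obtain $S \subseteq \bigcup_{i=1}^{K^{O(1)}} (t_i + H)$ for some subspace $H$ with $|H| \leq |S|$; since $K^{O(1)} = 2^{o(n)}$, this gives the desired conclusion.

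For the doubling step, the setup is as follows. Fix an $S$-injective linear map $M: \F_2^n \to \F_2^m$ realising $m = m^*(S) = \log_2|S| + o(n)$, and let $V = \ker M$. The image $T := M(S) \subseteq \F_2^m$ satisfies $|T| = |S|$ and $|T|/2^m = 2^{-o(n)}$, and since $T + T \subseteq \F_2^m$ one has trivially $|T+T| \leq 2^m = |T| \cdot 2^{o(n)}$, so the image has sub-exponential doubling inside $\F_2^m$. To lift this to $S$, note that $M(S+S) = T+T$, so $S+S$ is distributed among at most $|T+T|$ cosets of $V$, and for each such coset $c$ the $S$-injectivity of $M$ forces $|(S+S) \cap c| \leq |S|$ (for each $s_1 \in S$ the set $S \cap (c + s_1)$ contains at most one element, uniquely determining the potential partner $s_2$). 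Together this yields only $|S+S| \leq |S|^2 \cdot 2^{o(n)}$, which is too weak by a factor of $|S|$.

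The main obstacle is therefore to sharpen this coset-by-coset bound: for most cosets $c$ one wants $|(S+S) \cap c|$ to be only $2^{o(n)}$, not $|S|$. Equivalently, one needs to control the additive energy $E(S,S) = |\{(s_1,s_2,s_3,s_4) \in S^4 : s_1+s_2 = s_3+s_4\}|$, since pairs with sum in $c$ but distinct values correspond to nontrivial additive quadruples that the compressor $M$ must accommodate. A plausible attack is iterative: use the low-rank compressor $M$ to isolate a large sub-piece of $S$ contained in a single coset of a carefully chosen intermediate subspace, apply Plünnecke-Ruzsa locally, and stitch the local bounds into a global estimate. An alternative that bypasses the doubling step is to work entirely on the image side, applying Conjecture \ref{ruzsa} to $T$ in $\F_2^m$ (where small doubling is automatic) to cover $T$ by $2^{o(n)}$ translates of a subspace $W \subseteq \F_2^m$ with $|W| \leq |T|$, and then pulling back along $M$. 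The obstruction to this alternative is that the preimage $M^{-1}(W) = U + V$ has size $|W| \cdot |V|$, far exceeding $|S|$ whenever $V$ is large, so a further argument is required to certify that the unique element of $S$ sitting in each coset of $V$ that $M^{-1}(W)$ meets actually lies inside a common translate of a small subspace of $\F_2^n$.
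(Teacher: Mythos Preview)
The statement is labelled a conjecture in the paper and no proof is given there, so there is nothing to compare your attempt against. More to the point, your central reduction---showing that $m^*(S)=\log_2|S|+o(n)$ forces sub-exponential doubling $|S+S|/|S|=2^{o(n)}$---is not merely in need of sharpening: it is false. Take $n$ even, write $\F_2^n=V\oplus V'$ with $\dim V=\dim V'=n/2$, choose a uniformly random function $f:V'\to V$, and set $S=\{f(t)+t:t\in V'\}$. The projection onto $V'$ is $S$-injective, so $m^*(S)\le n/2=\log_2|S|$ and hence $m^*(S)=\log_2|S|$ exactly. Yet $S+S=\{f(t)+f(t')+t+t':t,t'\in V'\}$, and for random $f$ one checks that $|S+S|=\Theta(2^n)$ with high probability, giving a doubling constant of order $2^{n/2}$. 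So no amount of additive-energy or Pl\"unnecke--Ruzsa refinement will recover small doubling from the hypothesis.

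Your image-side alternative runs into precisely the obstruction you flag, and the same example shows that obstruction is fatal rather than technical: here $T=M(S)=V'$ is literally all of $\F_2^{n/2}$, trivially covered by a single subspace of size $|S|$, but pulling back yields $M^{-1}(V')=\F_2^n$, which is vacuous. All the content lies in where the single element of $S$ sits inside each coset of $\ker M$, and for random $f$ that placement has no linear structure. Indeed, a union bound over all affine subspaces of dimension at most $n/2$ (there are at most $2^{n^2/4+O(n)}$ of them, and for each the expected intersection with $S$ equals $1$ with Poisson-type tails) shows that with high probability no such subspace contains more than $O(n^2)$ points of $S$; hence any covering of $S$ by translates of subspaces of size at most $|S|$ requires at least $2^{n/2-O(\log n)}$ pieces. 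This appears to refute the conjecture as stated, not just your proposed route to it.
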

The above condition can only happen for sets which are non symmetric, i.e., not invariant under permutations of the $n$ coordinates (like a subspace), unless the set is very small (of size $2^{o(n)}$) or very large 
(of size $2^{n +o(n)}$). In the next section, we will focus on symmetric sets. 
%For sets which are symmetric and for which $\log_2 (|S|)$ is not a $o(n)$, the problem is also hard. Indeed, $S+S$ must contain a Hamming ball at $0^n$ (with vectors having only even weights), and we are back to a case as hard as the Hamming ball.  

\subsection{Linear Boolean classification for general models}
Given two disjoint classes $S_1,S_2 \subseteq \{0,1\}^n$, we are interested in constructing a linear map $M: \F_2^n \to \F_2^m$ such that for $x \in S_1 \cup S_2$, $Mx$ allows to determine if $x$ belongs to $S_1$ or $S_2$.
% i.e., $M S_1 \cap M S_2 = \emptyset$. The is the linear Boolean classification for two classes. 
Our goal is to identify the least dimension $m$ for which linear classification is possible when the two classes are given. %We will sometimes omit the term Boolean when the context is clear.  

\begin{Def}
Let $S_1,S_2 \subseteq \{0,1\}^n$ with $S_1 \cap S_2 = \emptyset$. A linear classifier for $(S_1,S_2)$ is a linear map $M: \F_2^n \to \F_2^m$ which is $(S_1,S_2)$-separable in the sense that 
\begin{align}
M x_1 \neq M x_2, \quad \forall x_1 \in S_1,x_2 \in S_2.
\end{align}
The linear classification dimension of the set pair $(S_1,S_2)$ is defined by 
\begin{align}
m^*(S_1,S_2)= \min_{M \in \text{$(S_1,S_2)$-separable}} \rank(M).
\end{align}
\end{Def}
Note that $m^*(S_1,S_2)$ can also be expressed as 
\begin{align}
m^*(S_1,S_2)= n-\max_{V: V \cap (S_1+S_2) = \emptyset} \dim(V), 
\end{align}
where $V$ denotes a subspace of $\F_2^n$.  
In other words, we need to find the largest dimension of a subspace avoiding $S_1+S_2$. This is again an instance of ``the critical problem''. The linear classification rate is defined by $m^*(S_1,S_2)/n$.

Note however that the two simple bounds obtained in the previous section do not give interesting results here. First, if non-linear maps are allowed, the lower bound on $m^*$ is simply 1 as there are only two sets. In terms of probabilistic bounds, if $M$ is drawn uniformly at random, then we obtain 
\begin{align}
m^*(S_1,S_2) \leq \left\lceil \log_2(|S_1+S_2|) \right\rceil .
\end{align}
Unlike in the coding problem, this is not a strong bound in general, as shown in Section \ref{results}. In particular, it does not take advantage of the fact that $S_1+S_2$ is away from $0^n$. 

\subsection{Formal definitions and equivalences}
Let $n \in \mZ_+$ and $S,S_1,S_2 \subseteq \{0,1\}^n$. %The arithmetic is over $\F_2$.   
\begin{Def}
A matrix $M$ is 
\begin{itemize}
\item $S$-distinguishable if $Mx\neq 0$ for all $x\in S$,
\item $S$-injective if $Mx\neq Mx'$ for all $x,x'\in S$, $x \neq x'$,
\item $(S_1,S_2)$-separable if $Mx_1\neq Mx_2$ for all $x_1 \in S_1,x_2\in S_2$.
\end{itemize}
\end{Def}
\noindent
Note that
\begin{itemize}
\item $M$ is $S$-distinguishable iff $\ker(M) \cap S = \emptyset$,
\item $M$ is $S$-injective iff $M$ is $((S+S) \setminus \{0\})$-distinguishable, 
\item $M$ is $(S_1,S_2)$-separable iff $M$ is $(S_1+S_2)$-distinguishable.
\end{itemize}
\noindent
From the first item above, a matrix of minimal rank which is $S$-distinguishable can be equivalently constructed by finding a space of maximal dimension which avoids $S$, an instance of ``the critical problem''. 
The meaning of previous mathematical objects in terms of compression, coding and classification notions are:
\begin{itemize}
\item $M$ is a lossless compressor for the source model $S$ iff $M$ is $S$-injective,
\item $M$ is the parity-check matrix of an error-correcting code for the error model $S$ iff it is a lossless compressor for the source model $S$,
\item $M$ is a linear classifier for the class models $(S_1,S_2)$ iff $M$ is $(S_1,S_2)$-separable.
\end{itemize}
One could also use classification in a coding context for errors, 
to determine if an error pattern belongs to two 
different classes (e.g., typical or atypical patterns, high or low 
SNR regimes). In this case the linear 
%constrained  Noga 
constraint
on the classifier is important to allow the source-channel coding duality.

In this paper, we are interested in symmetric sets, i.e., sets which are
invariant under permutations.  Note that these are defined by the Hamming
weights of their elements.  One of the most fundamental symmetric sets is
the Hamming ball at $0^n$, studied extensively in coding theory.  The only
other symmetric Hamming ball is the one centered at $1^n$.  A natural
set structure to consider next is the annulus, which contains the cases
above and extends to more general symmetric sets.  In particular, an
arbitrary symmetric set is a union of annuli. In addition, the sum of
two annuli, which matters for classification, is again an annulus
(except in particular cases where the two annuli have a single weight,
and their sum contains only even or odd weight vectors).

\begin{Def}
Let $1 \leq a \leq b \leq n$, 
\begin{align}
A(a,b,n)&=\{x \in  \{0,1\}^n : w(x) \in [a,b]\},\\
m^*(a,b,n)&=\min_{M \in A(a,b,n)\text{-distinguishable}} \rank(M)\\
&=n-\max_{V : V \cap A(a,b,n)=\emptyset } \dim(V),
\end{align}
where $M$ is a matrix over $\F_2$ with $n$ columns and $V$ is a subspace of $\F_2^n$.
\end{Def}
Note that $m^*(1,b,n)=m^*(b,n)$, i.e., when the annulus degenerates 
into the punctured ball $B(0^n,b) \setminus \{0^n\}$, 
the definition is consistent with the one of Section \ref{introa}. 

Our goal is to characterize $m^*(a,b,n)$ for finite values of the parameters and in the asymptotic regime. 

%We are interested in characterizing $k(a,b,n)$ for various ranges of the parameters $a,b$ and $n$. We will consider finite values and asymptotic regimes where $a=\alpha n$, $b=\beta n$ and $n$ diverges. Note that if $a=1$, this is exactly the problem of constructing linear codes of maximal dimension and distance $b+1$. 
%%In other words,
%%\begin{align}
%%k(1,b,n)=A[n,b+1],
%%\end{align}
%%where $A[n,b+1]$ is the largest dimension of a linear code of dimension $b+1$. 
%This problem is open in general. In the asymptotic regime, it is conjectured that 
%\begin{align}
%k(1,\beta n,n) = n(1-H(\beta)) + o(n),
%\end{align}
%where $H(x)= -x \log_2 x-(1-x) \log_2 (1-x)$ if $x \in [0,1/2]$ and $H(x)=1$ if $x \in (1/2,1]$. 
%This bound can be achieved with random matrices having uniform entries, i.e., dense matrices. 
%We will show that for annulus, optimal matrices may also contain sparse vectors. In particular, it may contain only sparse vectors in certain cases.

\section{Results}\label{results}
Recall that $m^*(1,b,n)$, the least rank of a parity-check matrix of a code of distance $b+1$ on a blocklength $n$, has no known explicit form in general. It is clear that 
%\begin{align}
$m^*(1,b,n)=n$, if $b = n$, 
%\label{extreme}
%\end{align}
and it is known that 
\begin{align}
m^*(1,b,n)=n + o(n), \quad \text{if } b \geq n/2, \label{plotkin}
\end{align}
from Plotkin's bound \cite{plotkin}. Also, the GV bound provides the inequality 
\begin{align}
m^*(1,\beta n,n) \leq nH(\beta) + o(n),
\end{align}
which is conjectured to be tight. In this paper we are interested in characterizing $m^*(a,b,n)$ for $a \geq 2$, possibly in terms of $m^*(1,b,n)$. In other words, we want to study how the ``hole'' in an annulus allows one to decrease the dimension of $M$ when compared to a Hamming ball. 

We obtain the following result.
\begin{prop}\label{mainprop}
For any $1 \leq a \leq b  \leq n$ with $b \geq 2a-2$, 
\begin{align}
m^*(a,b,n)=m^*(1,b,n-a+1).
\end{align}
\end{prop}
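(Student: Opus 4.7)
The plan is to prove both inequalities. For the direction $m^*(a,b,n)\le m^*(1,b,n-a+1)$ I would give a direct construction: take a minimum-rank parity-check matrix $M'$ of a length-$(n-a+1)$ code of distance $b+1$ and prepend $a-1$ zero columns to form $M=[\,0\mid M'\,]$. For any $x\in A(a,b,n)$ in $\ker M$, the last $n-a+1$ coordinates of $x$ are either $0$ (forcing $w(x)\le a-1$) or a nonzero codeword (forcing $w(x)\ge b+1$), both contradicting $w(x)\in[a,b]$. Only $a\le b+1$ is used here, not the stronger hypothesis $b\ge 2a-2$.

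For the reverse direction I would pass to the dual formulation and show that any subspace $V\subseteq\F_2^n$ with $V\cap A(a,b,n)=\emptyset$ admits a subspace $V'\subseteq\F_2^{n-a+1}$ with $V'\cap A(1,b,n-a+1)=\emptyset$ and $\dim V'\ge \dim V-(a-1)$. The structural observation is that the ``light part'' $W_0:=\{v\in V:w(v)\le a-1\}$ is itself a linear subspace of $V$, and this is exactly where the hypothesis $b\ge 2a-2$ enters: for $u,v\in W_0$ one has $w(u+v)\le 2(a-1)\le b$, and $V\cap A(a,b,n)=\emptyset$ then forces $w(u+v)\le a-1$. I would next prove the general lemma that any subspace of $\F_2^n$ whose nonzero elements all have weight at most $s$ has dimension at most $s$, by strong induction on $s$: pick a nonzero $v$ of support $S$, split the subspace via the projection onto $S$, and observe that vectors $x$ in the kernel are disjoint from $v$, so $w(x+v)=w(x)+|S|\le s$ gives $\dim\ker\le s-|S|$ by induction, while $\dim\mathrm{im}\le|S|$. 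Applied to $W_0$ this yields $k:=\dim W_0\le a-1$.

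For the final step I would convert $V$ into an honest length-$(n-a+1)$ code of distance $\ge b+1$. Pick pivot coordinates $c_1,\dots,c_k$ for $W_0$, reduce so that a basis $v_1,\dots,v_k$ of $W_0$ is in standard form at the pivots, and complete to a basis of $V$ by vectors $u_1,\dots,u_{t}$ (with $t=\dim V-k$) that are made zero at every pivot by subtracting appropriate $v_i$'s. The span $V''$ of the $u_i$'s, viewed in $\F_2^{n-k}$ after deleting pivots, is a code of dimension $t$ and distance $\ge b+1$: any nonzero $u\in V''$ lies in $V$, cannot lie in $W_0$ (the only element of $W_0$ that is zero at all pivots is $0$), and therefore has $w(u)\ge b+1$. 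When $k<a-1$ I would shorten by $a-1-k$ extra coordinates using the standard subcode trick, replacing the current code $C$ by $\{c\in C:c_i=0\}$ and then deleting coordinate $i$, which loses at most one dimension per step and preserves the distance bound. The resulting $V'\subseteq\F_2^{n-a+1}$ has $\dim V'\ge \dim V-(a-1)$ and avoids $A(1,b,n-a+1)$, yielding the desired inequality. The main obstacle is precisely that the whole pipeline relies on $W_0$ being linear and of dimension at most $a-1$, which hinges on $b\ge 2a-2$ and must fail in the counterexample range of Proposition~\ref{counterex}; the subcode shortening when $k<a-1$ is the only technical subtlety, since naive puncturing would cost too much distance.
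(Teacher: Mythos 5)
Your proof is correct, and it takes a genuinely different route from the paper's. The paper proves the hard inequality $m^*(a,b,n)\ge m^*(1,b,n-a+1)$ by strong induction on $a$: it isolates the sparsest nonzero vector $v$ in an optimal subspace $V$, with sparsity $s$, Gauss-eliminates on $v$'s support, and shows the residual subspace $V^\ast$ on $n-s$ coordinates avoids $A(a-s,b,n-s)$, yielding the recursion $k(a,b,n)\le\max_{1\le s\le a-1}\bigl[s+k(a-s,b,n-s)\bigr]$ which unrolls to the claim. You instead collapse the entire induction into one structural observation: the set $W_0=\{v\in V:w(v)\le a-1\}$ is already a linear subspace (this is exactly where $b\ge 2a-2$ enters, as in the paper's second remark after the proposition), its dimension is at most $a-1$ by the standard fact that a subspace whose nonzero vectors all have weight $\le s$ has dimension $\le s$ (which the paper also invokes, without proof, to justify \eqref{maincorolEASYfollows}), and a complement of $W_0$ restricted to the non-pivot coordinates is a length-$(n-k)$ code of distance $\ge b+1$, which you pad to length $n-a+1$ by the usual shortening when $k<a-1$. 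Both arguments are sound; the paper's recursive formulation is the form it needs for the max-over-$s$ statement and the discussion in its remarks, while yours gives a cleaner one-shot construction of the length-$(n-a+1)$ code and makes the role of the hypothesis $b\ge 2a-2$ (linearity of the light part) especially transparent.
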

Assuming the GV bound to be tight, the above result takes the following asymptotic form
\begin{align}
m^*(\alpha n,\beta n, n)=n(1-\alpha) H\left(\frac{\beta}{1-\alpha}\right) +o(n).
\end{align}
The proof below shows that an optimal kernel for an $A(a,b,n)$-distinguishable matrix is composed by a mixture of $a-1$ unit vectors and a code of largest dimension and minimal distance at least $b+1$ on the remaining components.  
\begin{proof}
We work with the kernel approach, maximizing the dimension of $V=\ker(M)$. We use the notation $k(a,b,n)=n-m^*(a,b,n)$ and provide a proof of Proposition \ref{mainprop} based on the weight of the sparsest vector in an optimal basis.

\begin{claim}\label{mainlemma}
If $b\geq 2a-2$ and $a\geq 2$,
\[
k(a,b,n) \leq \max_{1\leq  s \leq a-1} \left[ s + k(a-s,b,n-s) \right].
\]
\end{claim}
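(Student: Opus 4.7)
I work in the kernel picture: fix a subspace $V\subseteq\F_2^n$ of dimension $k(a,b,n)$ that avoids $A(a,b,n)$, and split according to whether $V$ contains a nonzero vector of weight at most $a-1$. In either case the plan is to pass to the slice $V_0:=V\cap\F_2^{T^c}$ for a carefully chosen set $T$ of coordinates. Rank-nullity applied to the projection $V\to\F_2^T$ gives $\dim V_0\geq \dim V-|T|$, so the goal is to identify the interval of weights that $V_0\subseteq\F_2^{n-|T|}$ is forced to avoid and then invoke $\dim V_0\leq k(a',b,n-|T|)$ for a suitable $a'$.

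Suppose first that some $v\in V$ has weight $s\in[1,a-1]$, and let $T=\supp(v)$, so that $|T|=s$ and $v=\mathbf{1}_T$. The key observation is that for every $u\in V_0$, both $u$ and $u+v$ lie in $V$; since $\supp(u)$ is disjoint from $T$, their weights are exactly $t$ and $t+s$, where $t=w(u)$. Because $V$ avoids weights in $[a,b]$, the value $t$ must avoid $[a,b]\cup[a-s,b-s]$. Under the hypotheses $s\leq a-1$ and $b\geq 2a-2$, one has $b-s\geq a-1$, so these two intervals fuse into the single interval $[a-s,b]$. Hence $V_0$, viewed as a subspace of $\F_2^{n-s}$, avoids $A(a-s,b,n-s)$, which yields $\dim V_0\leq k(a-s,b,n-s)$ and therefore $\dim V\leq s+k(a-s,b,n-s)$.

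In the remaining case $V$ has no nonzero vector of weight $\leq a-1$, so combined with avoiding $[a,b]$ every nonzero element of $V$ has weight $\geq b+1$. Picking any $T\subseteq[n]$ with $|T|=a-1$ and defining $V_0$ as before, $V_0$ is a code of minimum distance $>b$ in $\F_2^{n-a+1}$, so $\dim V\leq (a-1)+k(1,b,n-a+1)$; this is exactly the $s=a-1$ term of the maximum. The only delicate point is the interval merging in the first case: it is precisely the inequality $b-s+1\geq a$ (equivalently $b\geq 2a-2$ for all $s\leq a-1$) that prevents a gap between $[a-s,b-s]$ and $[a,b]$, forcing $V_0$ to avoid all of $[a-s,b]$. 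Outside this regime $V_0$ could legitimately contain vectors of weight in $(b-s,a)$, and the reduction from $a$ down to $a-s$ would fail.
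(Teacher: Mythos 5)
Your proof is correct and follows essentially the same approach as the paper's: fix a subspace $V$ avoiding $A(a,b,n)$, take a vector $v\in V$ of weight $s\le a-1$ (or handle the case where none exists), restrict to the coordinates outside $\supp(v)$, and use $b\ge 2a-2$ to merge the two forbidden weight intervals $[a-s,b-s]$ and $[a,b]$. The only difference is presentational: you package the restriction as the slice $V_0=V\cap\F_2^{T^c}$ with rank--nullity, whereas the paper performs Gaussian elimination on a basis matrix to exhibit the same $(\dim V - s)$-dimensional subspace $V^\ast$; your version is cleaner, and your treatment of the no-sparse-vector case (directly landing on the $s=a-1$ term) is also a bit more direct than the paper's chain of subadditivity inequalities.
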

\begin{proof}[Proof of Claim \ref{mainlemma}]
%Note that, because of Proposition \ref{propUB}, it suffices to show $k(a,b,n) \leq \max_{1\leq s \leq a-1} s + k({  a-s},b,n-s)$.

Let $V$ be a subspace that does not intersect $A(a,b,n)$, % we will show that $\dim(V) \leq \max_{1\leq s \leq a-1} s + k({  a-s},b,n-s)$.
and $s$ be the sparsity of the sparsest non-zero vector in $V$. If $s\geq a$ then $\dim(V) \leq k(1,b,n)$ and using $k(1,b,n-1) +1 \geq k(1,b,n)$,  
%by Lemma \ref{Lemmafromprop2},
\begin{align}
\dim(V) &\leq k(1,b,n) \leq 1+ k(1,b,n-1) \\ &\leq 1+ k(a-s,b,n-1) \\
 &\leq \max_{1\leq s \leq a-1} \left[ s + k(a-s,b,n-s) \right].
\end{align}
On the other hand, if $s < a$, we will show that $\dim(V) \leq s + k(a-s,b,n-s)$, proving the Lemma.

Let $v$ be a vector in $V$ that is exactly $s$-sparse. We permute (the coordinates of) $V$ so that $v$ is 1 in the first $s$ components and $0$ elsewhere. We represent $V$ by a matrix (below) with its rows forming a basis of $V$, we pick such a representation such that $v$ is the first row.
\[
\left[\begin{array}{cc} 1_{1\times s} &  0_{1\times (n-s)} \\  \vdots & \vdots \end{array} \right],
\]

%Where $[V_1'\ V_2']$ is a basis for the complement of $v$ in $V$.

If $\dim(V) < s$ then the result is trivial, so we will focus on the case $\dim(V) \geq s$, using Gauss Elimination one can find a basis of $V$ such that $v$ is the first row and the first $s$ by $s$ block is upper triangular, meaning:

\[
\left[\begin{array}{ccc}
1 & 1_{1\times (s-1)} &  0_{1\times (n-s)} \\
0_{(s-1)\times 1} & T & R \\
0_{(\dim(V)-s)\times 1} & 0_{(\dim(V)-s)\times (s-1)} & V^{\ast}
\end{array} \right],
\]
with $T$ upper-triangular.

Note that $V^\ast$ is a basis for a subspace in $n-s$ coordinates of dimension $\dim({ V})-s$. We next argue that $V^\ast \cap A(a-s, b, n-s) = \emptyset$. Indeed, suppose  $u \in V^\ast \cap A(a-s, b, n-s)$ then

\begin{itemize}

\item If $a \leq w(u) \leq b$ then the vector $[0_{s\times 1} \ u]\in V$ is in $A(a,b,n)$.

\item On the other hand, if $a-s \leq w(u) < a$ the vector $ [0_{s\times 1} \ u]\in V$ summed to the $s$ sparse vector $v\in V$ will give a vector in $A(a,a-1+s,n)$. Since $a-1+s \leq a-1+a-1 = 2a-2 \leq b$ then $A(a,a-1+s,n) \subset A(a,b,n)$.

\end{itemize}

This means that $\dim(V^\ast) \leq k(a-s,b,n-s)$ and  
%\begin{equation*}
$\dim(V) - s \leq k(a-s,b,n-s)$. 
%\end{equation*}
%as wanted.
\end{proof}

The proof of Proposition \ref{mainprop} follows then by a strong induction on $a$ and Proposition~\ref{propUB}.
\end{proof}

\begin{remark}
Note that the proof above would carry through if $b < 2a-2$, as long as there exists an element in the optimal subspace, whose Hamming weight is smaller or equal to $b-a+1$. This means that, for Conjecture~\ref{mainconjecture}, the subspace must have all its elements' weights not only avoiding $[a,b]$ but also avoiding $[1,b-a+1]$.
\end{remark}

\begin{remark}
 An intuitive way of thinking about the condition $b\geq 2a-2$ is that it enforces that any sum of sparse vectors $x,y \in A(a,b,n)$ (meaning $w(x),w(y) \leq a-1$) cannot be dense, as $w(x+y) \leq  b$. Indeed, one can use this fact to provide an alternative proof to Proposition~\ref{mainprop}.
\end{remark}

Using \eqref{plotkin}, we obtain the following corollary which provides a characterization of $m^*$ in certain cases. 
\begin{corol}\label{maincorol}
If $b \geq 2a-2$ and $2b\geq n-a+1$, 
\begin{align}
m^*(a,b,n)=n-a+1 +o(n).
\end{align}
\end{corol}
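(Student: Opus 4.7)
The plan is to chain together the two main results already stated in the excerpt: Proposition \ref{mainprop} (which is an exact identity valid under $b \geq 2a-2$) and Plotkin's asymptotic bound \eqref{plotkin} (which handles codes whose relative distance is at least $1/2$). Because the hypotheses of the corollary are precisely designed to let both results apply simultaneously, essentially no new combinatorics will be required.

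First, I would invoke Proposition \ref{mainprop} with the given $a,b,n$. Since $b \geq 2a-2$ is assumed, the proposition gives the exact identity
\[
m^*(a,b,n) \;=\; m^*(1,b,n-a+1).
\]
This reduces the corollary to a statement about the punctured-ball (equivalently, classical coding) quantity $m^*(1,b,n')$ with $n' := n-a+1$.

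Second, I would apply Plotkin's bound in the form \eqref{plotkin} to the blocklength $n'$ and distance $b$. The second hypothesis $2b \geq n-a+1$ is exactly the statement $b \geq n'/2$, which is the Plotkin regime. Therefore
\[
m^*(1,b,n') \;=\; n' + o(n') \;=\; (n-a+1) + o(n).
\]
Here I use that $o(n-a+1) = o(n)$, since $n-a+1 \leq n$ and, in the asymptotic regime being considered, $n-a+1 \to \infty$ (otherwise the statement is vacuous or trivial, since $m^*$ is bounded above by the number of columns $n-a+1$).

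Combining the two displays yields $m^*(a,b,n) = n-a+1 + o(n)$, as desired. The only thing to check is that neither step hides a subtlety: Proposition \ref{mainprop} is an \emph{equality} (not just an inequality), so no loss is incurred in the first step, and Plotkin's bound gives a matching $n'+o(n')$ asymptotic (the upper bound $m^*(1,b,n') \leq n'$ being trivial). There is no real obstacle; the corollary is essentially a bookkeeping consequence of the preceding two results, and the only mildly delicate point is making sure the parameter substitution $n \mapsto n-a+1$ preserves the little-$o$ term, which it does.
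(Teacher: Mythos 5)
Your proof is correct and matches the paper's intended argument exactly: the text introduces the corollary with ``Using \eqref{plotkin},'' and the two hypotheses $b \geq 2a-2$ and $2b \geq n-a+1$ are precisely what is needed to apply Proposition \ref{mainprop} followed by Plotkin's bound at blocklength $n' = n-a+1$. Your remark about the $o(n') = o(n)$ bookkeeping is a reasonable extra check but introduces no substantive difference.
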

Note that in terms of the classification problem, the above applies to the cases of two sets having a sumset which is dense, such as two Hamming balls $0^n$ and $1^n$, in which case poling of the components is optimal and one does not have to rely on a dense code.  

It is also straightforward to show that the following holds for any $a$:
\begin{align}\label{maincorolEASYfollows}
m^*(a,n,n)=n-a+1.
\end{align}
The proof simply uses the fact that a subspace containing $a$ 
linearly independent vectors must generate a vector of weight 
% Noga
at least 
$a$. %hence there cannot be more than $a-1$ such vectors generating a subspace of weight at most $a-1$. 
 
Going back to the example where
$S_1=B(1^n,s)$ and $S_2=B(0^n,s)$, $s < n/2$, are the two classes,
we have $S_1+S_2=B(1^n,2s)=A(n-2s,n,n)$. Hence, by (\ref{maincorolEASYfollows}),
$m^*=2s+1$. This is indeed verified by the fact that the identity matrix
of dimension $2s+1$ allows to classify the two sets (by a simple majority
count on any $2s+1$ coordinates). This also shows that the classification
dimension can be simpler to find than the compression dimension, since
the sum of two disjoint sets $S_1+S_2$ is typically away from $0^n$ and
hence $V$ can contain sparse vectors, whereas the sumset $S+S$ contains a
ball around $0^n$ and requires a packing of dense vectors, which is more challenging.

Note that conjecture Proposition \ref{mainprop} does not hold for any values of $a,b,n$. For example, the following case is a counter-example, 
\begin{align}\label{degeneratecaseform}
m^*(a,a,n)=1, \quad \text{$a$ odd, $n$ even}.
\end{align}
It is straightforward to verify (\ref{degeneratecaseform}) by noting that packing vectors of even weight will never produce an odd weight vector.
Moreover, as shown in Section \ref{counter}, the previous ``pathological'' case is not the only exception to the conclusion of Proposition \ref{mainprop}. However, we conjecture that the asymptotic version of Proposition 1 holds universally.  
\begin{conj}\label{mainconjecture}
For any $0 \leq \alpha < \beta \leq 1$, 
\begin{align}
m^*(\alpha n,\beta n, n)=n(1-\alpha) \tilde{H}\left(\frac{\beta}{1-\alpha}\right) +o(n),
\end{align}
where $\tilde{H}(c)$ is the optimal compression rate of a code of relative distance $c$ (which matches the entropy function $H(c)$ if the GV bound is tight).  
%Proposition \ref{mainprop} holds for all $1 \leq a < b \leq n,$ except for degenerate cases of the form of (\ref{degeneratecaseform}).
\end{conj}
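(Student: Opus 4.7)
\emph{Upper bound.} Mimic the construction from Proposition~\ref{mainprop}. With $a=\lceil\alpha n\rceil$ and $b=\lfloor\beta n\rfloor$, embed an optimal linear code $C$ of length $n-a+1$ and minimum distance at least $b+1$ into the last $n-a+1$ coordinates, and set $V=\mathrm{span}(e_1,\ldots,e_{a-1})\oplus C$. Any nonzero $v\in V$ has weight at most $a-1$ (if its $C$-component is zero) or at least $b+1$ (if its $C$-component is nonzero), so $V\cap A(a,b,n)=\emptyset$, and
\[
m^*(a,b,n)\leq n-\dim V = m^*(1,b,n-a+1)=n(1-\alpha)\tilde H\big(\beta/(1-\alpha)\big)+o(n)
\]
by the definition of $\tilde H$.

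\emph{Lower bound.} When $\beta\geq 2\alpha$ the hypothesis of Proposition~\ref{mainprop} holds for all $n$ large enough, giving $m^*(a,b,n)=m^*(1,b,n-a+1)$ exactly and the conjecture follows. The substantive case is $\beta<2\alpha$. Here the plan is to iterate the sparsity-extraction of Claim~\ref{mainlemma}: given any $V\subseteq\F_2^n$ with $V\cap A(a,b,n)=\emptyset$, let $s_1$ be the minimum nonzero weight in $V$. If $s_1\geq a$, then $V$ is itself a linear code of length $n$ and minimum distance at least $b+1$, so $\dim V\leq n(1-\tilde H(\beta))+o(n)$; using concavity of $\tilde H$ together with $\tilde H(0)=0$ (Jensen's inequality then gives $\tilde H(\beta)\geq(1-\alpha)\tilde H(\beta/(1-\alpha))$, which is equivalent to the standard convexity of the rate function $1-\tilde H$), the target bound follows. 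If instead $s_1<a$, apply the Gauss-elimination step of Claim~\ref{mainlemma} to produce a residual $V_1^*\subseteq\F_2^{n-s_1}$ of dimension $\dim V-s_1$; crucially $V_1^*$ must avoid \emph{both} $A(a-s_1,b-s_1,n-s_1)$ (sums through the extracted sparse vector) and $A(a,b,n-s_1)$ (the trivial projections of untouched vectors). Iterate, keeping track of the multi-annulus avoidance condition
\[
V_t^*\cap\bigcup_{I\subseteq[t]}A\big(a-\sigma_I,\,b-\sigma_I,\,n-S_t\big)=\emptyset,\qquad \sigma_I=\sum_{i\in I}s_i,\ S_t=\sigma_{[t]},
\]
until the residual itself satisfies $s_{t+1}\geq a-S_t$ and the coding bound applies.

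\emph{Main obstacle.} The principal technical difficulty is controlling the deteriorating ``effective distance'' across iterations. A naive induction which uses only the shifted annulus $A(a-S_t,b-S_t,n-S_t)$ yields at best $\dim V\leq\alpha n+(1-\alpha)n\big(1-\tilde H((\beta-\alpha)/(1-\alpha))\big)+o(n)$, strictly weaker than the conjecture, because the distance parameter has been shifted from $\beta n$ to $(\beta-\alpha)n$. To recover the full $\beta$ inside $\tilde H$ one must exploit the $I=\emptyset$ constraint (which forbids weights in $[a,b]$ directly in the residual) together with the intermediate shifts $\sigma_I$. Making this sharp may ultimately require a structural rigidity statement in the spirit of the Polynomial Freiman-Ruzsa conjecture (Conjecture~\ref{ruzsa}): every near-optimal $V$ should be $o(n)$-close to a direct sum of the upper-bound form, namely a low-weight subspace on a small coordinate block plus a code on the complement. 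Proving such rigidity — or finding a sharp alternative accounting of the iteration — is the main obstruction, and is what ties this conjecture to ``the critical problem'' in its general form.
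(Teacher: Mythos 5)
This statement is Conjecture~\ref{mainconjecture}, and the paper presents it as an \emph{open} conjecture: there is no proof of it in the paper to compare against. What the paper establishes is (i) the upper bound $m^*(a,b,n)\le m^*(1,b,n-a+1)$ for all $a\le b$ (Proposition~\ref{propUB}), (ii) the exact equality $m^*(a,b,n)=m^*(1,b,n-a+1)$ under the restriction $b\ge 2a-2$ (Proposition~\ref{mainprop}), and (iii) a counterexample showing the exact equality fails for some $b<2a-2$ (Proposition~\ref{counterex}), which is precisely why only an asymptotic statement is conjectured. The lower-bound direction for $\beta<2\alpha$ is left open.

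Your proposal tracks this state of knowledge faithfully. The upper-bound construction is identical to Proposition~\ref{propUB}, and the reduction of the $\beta\ge 2\alpha$ case to Proposition~\ref{mainprop} is correct. Your iterated sparsity-extraction bookkeeping, including the multi-annulus avoidance constraint indexed by $I\subseteq[t]$, is a legitimate sharpening of the single step in Claim~\ref{mainlemma}, and you correctly identify where the na\"ive induction loses: once $s_1>b-a+1$ the forbidden interval $[a-s_1,b-s_1]$ detaches from $[a,b]$, and simply shifting $a,b$ downward underestimates the distance available in the residual. But you do not close this gap, and you say so. So the proposal is a plan plus an obstruction analysis, not a proof; that is consistent with the paper, where the statement remains conjectural.

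One secondary issue worth flagging: in the $s_1\ge a$ branch you invoke concavity of $\tilde H$ (equivalently, convexity of the optimal rate function $R(\delta)=1-\tilde H(\delta)$) to deduce $\tilde H(\beta)\ge(1-\alpha)\tilde H\!\left(\beta/(1-\alpha)\right)$. That inequality is what you need, and it would follow from concavity of $\tilde H$ together with $\tilde H(0)=0$, but concavity of the true optimal rate function is not known unconditionally — it holds if the GV bound is tight, which is itself conjectural. So even this branch of your lower bound silently imports an unproven regularity property of $\tilde H$. If you want the $s_1\ge a$ branch to be rigorous independent of GV-tightness, you would need either to prove convexity of $R$ or to replace the Jensen step with a direct coding-theoretic argument (e.g., a shortening/puncturing construction relating codes of length $n$ to codes of length $(1-\alpha)n$ at the appropriate relative distance).
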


The upper-bound holds in full generality.  
\begin{prop}\label{propUB}
For any $1 \leq a \leq b \leq n$, 
\begin{align}
m^*(a,b,n) \leq m^*(1,b,n-a+1).
\end{align}
\end{prop}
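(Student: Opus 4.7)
The plan is to prove the inequality by an explicit construction: I will exhibit an $A(a,b,n)$-distinguishable matrix $M$ of rank at most $m^*(1,b,n-a+1)$, which directly implies the stated bound by the definition of $m^*(a,b,n)$ as a minimum rank.

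First, I would invoke the existence of an optimal matrix $M_0$ in $\F_2^{m \times (n-a+1)}$ with $m = m^*(1,b,n-a+1)$ that is $A(1,b,n-a+1)$-distinguishable; equivalently, every nonzero vector of $\ker(M_0)$ has Hamming weight strictly greater than $b$. Then I would define
\[
M = \bigl[\,0_{m \times (a-1)} \;\big|\; M_0\,\bigr] \in \F_2^{m \times n},
\]
so that the first $a-1$ columns of $M$ are zero and the remaining $n-a+1$ columns form $M_0$. Clearly $\rank(M) = \rank(M_0) \leq m^*(1,b,n-a+1)$, so it only remains to verify the distinguishability property.

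Next, I would split any $x \in \F_2^n$ as $x = (u,v)$ with $u \in \F_2^{a-1}$ and $v \in \F_2^{n-a+1}$. By construction $Mx = M_0 v$, so $Mx = 0$ forces $v \in \ker(M_0)$, i.e., either $v = 0$ or $w(v) > b$. In the first case $w(x) = w(u) \leq a-1 < a$, and in the second case $w(x) \geq w(v) > b$; either way $w(x) \notin [a,b]$. Therefore $x \notin A(a,b,n)$, proving that $M$ is $A(a,b,n)$-distinguishable.

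This construction has a transparent kernel-side interpretation, which is in fact the one flagged in the paragraph before the proposition: the subspace $V = \F_2^{a-1} \times \ker(M_0) \subseteq \F_2^n$ has dimension $(a-1) + (n-a+1-m^*(1,b,n-a+1))$ and avoids $A(a,b,n)$, so its existence gives the same bound via $m^*(a,b,n) = n - \max_V \dim V$. Since the argument is a direct padding construction, there is no real obstacle; the only point that requires care is the case analysis on whether the $(n-a+1)$-component part of $x$ is zero or not, and that is exactly the step that shows why $a-1$ is the correct number of free coordinates to prepend.
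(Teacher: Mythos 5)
Your construction is exactly the one the paper sketches: the kernel $V = \F_2^{a-1}\times\ker(M_0)$ is precisely ``$a-1$ unit vectors together with an optimal subspace of weight at least $b+1$ on the complementary coordinates,'' and your case analysis ($v=0$ vs.\ $w(v)>b$) correctly verifies that $V$ avoids $A(a,b,n)$. The proof is correct and matches the paper's approach, just spelled out in more detail.
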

To achieve the above, it is enough to take for the kernel of $M$ a basis consisting of $a-1$ unit vectors and an optimal subspace of weight at least $b+1$ on the complement coordinates.  
%Note that $k(1,b,n-a+1)$ is the maximal dimension of a linear code of distance $b+1$ on a blocklength $n-a+1$. 
%\begin{proof}
%Consider the canonical basis on $a-1$ components and on the remaining components, consider a vector space of maximal dimension and of minimal weight at least $b+1$. 
%\end{proof}

%We next show that this lower bound is tight in some regimes of the parameters. 
%\begin{prop}
%For $b \geq 2a-2$,
%\begin{align}
%k(a,b,n) = a-1 + k(1,b,n).
%\end{align}
%\end{prop}

We further obtain a few more cases which corroborate the conjecture. 

In particular, as observed by the second author in the 80s (c.f. \cite{enomoto}),
a theorem of Olson \cite{Olson_abelian} 
can be used to show that if $a$ is a power of
$2$ then any
binary linear code of dimension  $a$ (and any length) contains a
vector of
Hamming weight divisible by $a$. This implies that if $a>n/2$ is a
power of
$2$ then 
% Noga
$k(a,a,n)=n-m^*(a,a,n)=a-1$. A more general
result is proved in \cite{enomoto} where it is shown that
for any even $a>n/2$, $k(a,a,n)=a-1$. Further results on codes with a forbidden distance can be found in \cite{cohen}. 

\section{A counter-example to the generality of Proposition \ref{mainprop}}\label{counter}
\begin{prop}\label{counterex}
There exists $1 \leq a<b \leq n$ such that $m^*(a,b,n)\ne m^*(1,b,n-a+1)$.
\end{prop}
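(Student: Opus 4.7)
The plan is to exhibit an explicit triple $(a,b,n)$ at which the equality of Proposition \ref{mainprop} fails. My candidate is $(a,b,n)=(5,7,16)$: note that $b=7<8=2a-2$, so Proposition \ref{mainprop} does not apply, and moreover $n=16$ is large enough to accommodate a non-trivial doubly-even binary linear code. I aim to show $m^*(5,7,16)\le 8<10=m^*(1,7,12)$, yielding the desired strict inequality.

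First, to upper bound $m^*(5,7,16)$, I would take $V$ to be the direct sum of two copies of the $[8,4,4]$ extended Hamming code, viewed as an $8$-dimensional subspace of $\F_2^{16}$. Since the extended Hamming code has weight enumerator $1+14x^4+x^8$, every codeword of $V$ has weight in $\{0,4,8,12,16\}$; in particular $V\cap A(5,7,16)=\emptyset$, whence the maximal ``hole'' dimension satisfies $k(5,7,16)\ge 8$, and therefore $m^*(5,7,16)\le 16-8=8$.

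Next, to evaluate $m^*(1,7,12)$, I would invoke the Plotkin bound on binary linear codes $C$ of length $12$ and minimum distance at least $8$: since $2d=16>12=n$, Plotkin yields $|C|\le 2d/(2d-n)=4$, hence $\dim C\le 2$. A matching $[12,2,8]$ code is $\{0^{12},\,1^{8}0^{4},\,1^{4}0^{4}1^{4},\,0^{4}1^{8}\}$ (all three nonzero codewords have weight $8$, and all pairwise distances equal $8$), so $k(1,7,12)=2$ and $m^*(1,7,12)=10$.

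Combining the two estimates gives $m^*(5,7,16)\le 8<10=m^*(1,7,12)=m^*(1,b,n-a+1)$, so Proposition \ref{mainprop} fails at $(a,b,n)=(5,7,16)$, proving Proposition \ref{counterex}. The only ``obstacle'' is choosing the triple well: one needs $b$ small enough relative to $a$ that Plotkin is active on length $n-a+1$ and distance $b+1$, while $n$ must be large enough for a structured subspace (here, a sum of two doubly-even codes) to beat the ``unit-vectors-plus-code'' construction from Proposition \ref{propUB}; once the triple is fixed, both estimates are entirely routine.
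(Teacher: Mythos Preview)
Your proof is correct. Both estimates are clean: the direct sum of two $[8,4,4]$ extended Hamming codes is indeed an $8$-dimensional doubly-even subspace of $\F_2^{16}$ whose nonzero weights lie in $\{4,8,12,16\}$, so it avoids $A(5,7,16)$ and gives $m^*(5,7,16)\le 8$; and the Plotkin bound together with your explicit $[12,2,8]$ code pin down $m^*(1,7,12)=10$ exactly.

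Your approach differs from the paper's mainly in scope. The paper builds a parametrized family $V_{n,d}+W'_{n,d,b}$, where $V_{n,d}$ forces weights to be multiples of $2d$ by making blocks of $d$ coordinates constant, and $W'_{n,d,b}$ is a high-distance code supported on half of each block; specializing to $d=2$ yields $m^*(a,b,n)\le 1+m^*(1,b,\lceil n/2\rceil)$ whenever $[a,b]$ contains no multiple of $4$, and then an asymptotic Plotkin argument shows this beats $m^*(1,b,n-a+1)$ for a whole range of $(a,b,n)$ with $a>n/3$. Your argument instead fixes one triple and uses a single well-chosen doubly-even code (the extended Hamming code squared) in place of the paper's $V_{n,2}+W'_{n,2,7}$; at $(5,7,16)$ both constructions give the same bound $m^*\le 8$, but yours is self-contained and avoids the general machinery. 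The trade-off is that the paper's construction immediately yields infinitely many counterexamples and an asymptotic gap, whereas your route gives a shorter and fully explicit proof of the bare existential statement.
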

This disproves Conjecture 3 in \cite{isit_annulus}. 

\begin{proof}
For any $n$ and $d$, let 
\begin{align}
V_{n,d}=\{x\in \mathbb{F}_2^n: w(x)\equiv 0\pmod{2d} \text{ and } \forall i\le \left\lfloor \frac{n}{d}\right\rfloor, x_{di-d+1}=x_{di-d+2}=...=x_{di}\}.
\end{align}
We will show that $V_{n,2}$ already provides a counter-example, but carry the more general argument first. 
Note that $V_{n,d}$ is a $(\left\lfloor \frac{n}{d}\right\rfloor-1)$-dimensional subspace of $\mathbb{F}_2^n$. Also, for any $a$ and $b$ such that there is no multiple of $2d$ in $[a,b]$, $V_{n,d} \cap A(a,b,n)=\emptyset$. 

Next, let 
\begin{align}
W_{n,d}=\{x\in \mathbb{F}_2^n: \forall i\le d\left\lfloor \frac{n}{d}\right\rfloor \text{ s.t. } d\left\lceil \frac{i}{d}\right\rceil-i< \frac{d}{2}, x_i=0 \}.
\end{align}
In other words, $W_{n,d}$ is the subspace of $\mathbb{F}_2^n$ consisting of all vectors such that when the first $d\left\lfloor \frac{n}{d}\right\rfloor$ indices are divided into blocks of length $d$, every component of the vector that corresponds to an index in the second half of its block is $0$. Every element of $V_{n,d}$ has all components with indices in the same block set to the same value, so for any $x\in V_{n,d}$ and $y\in W_{n,d}$, $w(x+y)\ge w(y)$.

Now, for any $n$, $d$, and $b$, let $W'_{n,d,b}$ be the highest dimensional subspace of $W_{n,d}$ with no nonzero element of Hamming weight $b$ or less. $W_{n,d}$ has dimension
\begin{align}
\left\lfloor \frac{d}{2}\right\rfloor\cdot \left\lfloor \frac{n}{d}\right\rfloor +n-d\left\lfloor \frac{n}{d}\right\rfloor,
\end{align}
so $W'_{n,d,b}$ has dimension
\begin{align}
 \left\lfloor \frac{d}{2}\right\rfloor\cdot \left\lfloor \frac{n}{d}\right\rfloor +n-d\left\lfloor \frac{n}{d}\right\rfloor-m^*(1,b,\left\lfloor \frac{d}{2}\right\rfloor\cdot \left\lfloor \frac{n}{d}\right\rfloor +n-d\left\lfloor \frac{n}{d}\right\rfloor).
\end{align} 
Also, for any $x\in V_{n,d}$ and $y\in W'_{n,d,b}$, if $y\ne 0$ then $w(x+y)\ge w(y)>b$ and if $y=0$, $w(x+y)$ is divisible by $2d$. So, given any $a\le b$ such that there is no multiple of $2d$ in $[a,b]$, it follows that 
\begin{align}
 m^*(a,b,n)\le \left\lceil \frac{d-2}{2}\right\rceil\cdot \left\lfloor \frac{n}{d}\right\rfloor+1+m^*(1,b,\left\lfloor \frac{d}{2}\right\rfloor\cdot \left\lfloor \frac{n}{d}\right\rfloor +n-d\left\lfloor \frac{n}{d}\right\rfloor).
\end{align}
In many cases, it is unclear how this compares to $m^*(1,b,n-a+1)$. However, if $d=2$, this implies that 
\begin{align}
m^*(a,b,n)\le 1+m^*(1,b,\left\lceil \frac{n}{2}\right\rceil)\le \left\lceil \frac{n}{2}\right\rceil+1.
\end{align}
For every $\delta>\frac{1}{3}$, there exists $c$ such that for all $\delta n\le a\le b$, $m^*(b,n-a+1)\ge n-a-c$. So, for all sufficiently large $n$ and all $\delta n\le a\le b<\frac{n}{2}-c-1$ such that there is no multiple of $4$ in $[a,b]$, $m^*(1,b,n-a+1) \geq n-a-O(1)$, which can be made close to $2/3n$, hence much larger than $ \left\lceil \frac{n}{2}\right\rceil+1\ge m^*(a,b,n)$. Therefore, 
\begin{align}
%m^*(1,b,n-a+1)>\frac{n}{2}+1\ge \left\lceil \frac{n}{2}\right\rceil+1\ge m^*(a,b,n). 
m^*(a,b,n)\ne m^*(1,b,n-a+1).
\end{align}
\end{proof}
Note however that this does not seem to give a counter-example to Conjecture \ref{mainconjecture}.

\section{Future work}
For symmetric sets, the next steps would be to investigate further regimes 
for the radius of the annulus, supporting or disproving the
conjecture or to consider the union of two annuli. 
It would also be interesting to consider a probabilistic
rather than worst-case model for the linear Boolean classification
problem, as well as an adaptive query model.

The complexity of the classification would be another
interesting direction to pursue. In this paper, we do not specify algorithms to decide among the two classes by
accessing the linear queries $y=Mx$. A general approach is 
% Noga
to
find a solution $x_0$ of $y=Mx$,
and to then verify if $x_0 + \ker (M)$ intersects $S_1$ or $S_2$. This
may be of course computationally costly, but our goal in this paper is
only to identify the least rank of $M$ for which the intersection always
happens only with one of the two sets. It would be interesting to consider computational efficient
classifiers.

Finally, a natural extension is to
consider the problem of constructing matrices that allow to classify
certain sets while compressing others. This may be used to construct codes that allow to narrow down the search of the transmitted codeword by successively discarding subsets of possible codewords. This will be other instances of
``the critical problem''.

\section*{Acknowledgement}
E.~Abbe was supported in part by NSF grant CIF-1706648. N.~Along was supported in part by a USA-Israeli BSF grant,
by an ISF grant, by the Israeli I-Core program and by the Oswald Veblen Fund. A.~S.~Bandeira was supported by AFOSR Grant No. FA9550-12-1-0317.

%ADD to refs: Green, Olson:
%
%B. Green. The polynomial Freiman-Ruzsa conjecture, 2005.
%http://www.dpmms.cam.ac.uk/~bjg23/papers/PFR.pdf.
%
%J. E. Olson,
%A combinatorial problem on finite abelian groups,
%J. Number Theory 1, 8-10.

%\bibliographystyle{amsplain}
%\bibliography{gen,gen_2}

\end{document}